\theoremstyle{plain}
\newtheorem{thm}{\textbf{Theorem}}
\newtheorem{lem}{\textbf{Lemma}}
\newtheorem{cor}{\textbf{Corollary}}
\DeclarePairedDelimiter\set{\{}{\}}
\DeclarePairedDelimiter\abs{\lvert}{\rvert}
\DeclarePairedDelimiter\norm{\lVert}{\rVert}
\theoremstyle{remark}
\newtheorem{rem}{\textbf{Remark}}
\title{Counter-example Guided Learning of Bounds on Environment Behavior}
\author{Yuxiao Chen, Sumanth Dathathri, Tung Phan-Minh, and Richard M. Murray\\
 California Institute of Technology\\
  \texttt{\{chenyx, sdathath, tung\}@caltech.edu, murray@cds.caltech.edu} \\}
\date{March 2019}
\begin{document}
\maketitle

\begin{abstract}
There is a growing interest in building autonomous systems that interact with complex environments.
The difficulty associated with obtaining an accurate model for such environments poses a challenge to the task of assessing and guaranteeing the system's performance.
We present a data-driven solution that allows for a system to be evaluated for specification conformance without an accurate model of the environment.
Our approach involves learning a conservative reactive bound of the environment's behavior using data and specification of the system's desired behavior.
First, the approach begins by learning a conservative reactive bound on the environment's actions that captures its possible behaviors with high probability.
This bound is then used to assist verification, and if the verification fails under this bound, the algorithm returns counter-examples to show how failure occurs and then uses these to refine the bound.
We demonstrate the applicability of the approach through two case-studies: i) verifying controllers for a toy multi-robot system, and ii) verifying an instance of human-robot interaction during a lane-change maneuver given real-world human driving data.
\end{abstract}
\section{Introduction}
In control and decision-making tasks, typically, the system can be divided into the controlled agent and the uncontrolled environment, which is the source of exogenous disturbances and uncertainties.
For systems that are safety-critical, given a control policy, it is desirable that we are able to provide guarantees with regard to task fulfillment and safe behavior.
In this regard, formal verification allows us to provide strong guarantees about the absence of unsafe behavior for the controlled agent under all possible behaviors of the environment.
However, to leverage the power of formal verification it is necessary to obtain a reliable model of the environment.
For systems that exhibit complex behavior, accurately modeling the complex environment can be limited by the expressiveness of the model being used, and the amount of data available. In addition, the environment behavior is usually nondeterministic, which may be hard to express.

To overcome this need for explicitly modeling the complete environment, we instead propose an alternative approach that computes \emph{reactive bounds} on the set of feasible behaviors of the environment by leveraging i) a specification for the behavior of the system, and ii) the controller being verified, in addition to the real data collected from naturalistic environment behaviors.
The bounds computed here are \emph{reactive} in that they capture the reactiveness of the environment towards the controlled agent, i.e., the set of possible environment behaviors changes with the scenario described by the system states.
For the purpose of verification, it often may suffice to compute a conservative bound on the possible behaviors of the unmodeled environment as opposed to learning a complete model of the environment.
For example, consider a scenario where two cars are driving perpendicularly towards an intersection with the same distance $d$ to the intersection, and the controlled car (whose controller we seek to verify) drives with a constant velocity $v$.
To verify that there will be no collision (the desired specification), it suffices to bound the velocity of the other car by $u$ such that:
$\frac{d + w}{u}  < \frac{d}{v}$, where $w$ is the width of the intersection.
This guarantees that the uncontrolled car will not enter the intersection before the controlled car leaves the intersection.
Here, a conservative bound on the behavior of the uncontrolled agent (environment) suffices to verify the behavior of the controlled agent, without modeling the exact behavior of the environment.
However, having bounds on the behavior of the environment that are too loose might result in the controller not being provably safe with regard to the bound, despite being safe with regard to the true environment.
In this direction, we present a novel approach where the specification and the controller being verified guide the learning of a reactive bound from data.
%
%

Recently, there has been an increased interest in data-driven verification for cyber-physical systems \cite{DRYVR, Kozarev, 8272345}.
In \cite{Balkan}, a data-driven automated approach is proposed to identify non-converging behaviors in black box control systems.
In \citep{sofiedatadriven}, the authors propose an approach based on Bayesian inference and reachability analysis for verifying the behavior of systems.
However, the approach does not decompose the system into the uncontrolled environment and the controller, and is limited to the model class of linear time-invariant systems.
In contrast, our approach leverages known policies for the controlled agent to enable verification of their behavior with complex environments.
A closely related direction of work is on mining specifications \cite{JinX, vazquez2017logical, NIPS2018Marcell} from data. The mined specifications are often used as task specifications, as opposed to being used to verify a given controller.
For the case of human-robot interaction, treating it as a multi-agent task and leveraging the influence of the autonomous agent on the (uncontrolled) human has been considered before \cite{6630866, 7139555}.
In \cite{sadigh2016planning}, the authors propose an approach that learns a reward function to model the behavior of the uncontrolled agent and then plans for the autonomous agent, leveraging this reward function to model the influence of the autonomous agent on the uncontrolled agent.
Here the authors incorporate the environments behavior into the planning phase, while in contrast, we leverage the controlled agent's policy and the desired safety specification for the system to characterize the environment's behavior.
\citep{Strabala} demonstrates the benefits of learning the intent of the uncontrolled agent prior to the physical event, and leveraging this for seamless collaboration.
In our work, we aim at learning a set of possible environment behaviors that changes with the state, which enables worst-case analysis and is subsequently used for verification of the system.
To summarize, our main contributions are:
\begin{itemize}
    \item A framework for characterizing the behavior of the environment for which the given controller can be certified safe with high probability: given data characterizing the behavior of the environment, the safety-specification and the controller for the system.
    \item When the controller is inherently unsafe, a feedback mechanism for the controller.
    \item Experiments demonstrating the efficiency of the proposed approach on a toy multi-agent task, and verifying a controller for an autonomous vehicle performing a lane change while interacting with a human-driven vehicle characterized by real-world data.
\end{itemize}

The paper is structured as follows. First, we provide a brief introduction to Signal Temporal Logic (STL) and Random Convex Programs (RCP). Then, we provide an overview of the proposed approach, followed by a theoretical analysis. Lastly, we describe our results from two empirical case studies on problems from diverse domains and then conclude.
%

%

\section{Preliminaries}

Consider a dynamical system $\Sigma$ described by differential or difference equations:
\begin{equation}
    x^+ = f(x,u,d),
\end{equation}
where $x\in\mathcal{X}$ is the state, $u\in\mathcal{U}$ is the control input, $d\in\mathcal{D}$ is environment input, and $\mathcal{X}$, $\mathcal{U}$, $\mathcal{D}$ incorporates the physical limits of the variables.
A run of $\Sigma$ is an indexed family $\sigma$ consisting of 3-tuples of the form $\sigma_t = (x(t),u(t),d(t))$, satisfying the dynamics equation.
If $\sigma$ is a run of $\Sigma$, we will also write $\sigma \models\Sigma$.
A run can be infinite or finite with horizon $T$.
\subsection{Signal Temporal Logic}
To express desired properties for the system, we use the formalism of \textit{Signal Temporal Logic} (STL) \cite{donze2013signal}, an extension of Linear Temporal Logic to vector-valued signals.
For any $a,b \in \mathbb{R}$, we will denote by $[a,b]$ the closed interval $\set{x \in \mathbb{R} \mid a \leq x \leq b}$.
STL formulae can be built recursively~as:
\begin{equation*}
\label{stl:syntax}
\varphi \triangleq \texttt{True} \mid p \mid \lnot \varphi \mid \varphi \land \psi \mid \varphi \mathbf{U}_{[a,b]} \psi,
\end{equation*}
where $p$ is an atomic predicate of the form:
$
p \triangleq f(\sigma(t)) > 0
$ for some $f: \mathbb{R}^m \rightarrow \mathbb{R}$.
We write $(\sigma, t) \models \varphi$ to indicate that $\varphi$ holds for $\sigma(t)$.
The satisfaction of a signal $\sigma$ at time $t$ for any of the building block formulae in \eqref{stl:syntax} is defined in the obvious way, except perhaps the one with the ``until'' operator $\textbf{U}$, which is given as:
$$
(\sigma,t) \models \varphi \textbf{U}_{[a,b]} \psi \Leftrightarrow \exists \tau \in [t+a, t+b]. (\sigma,\tau) \models \psi \land \forall \tau' \in [t, \tau]. (\sigma, \tau') \models \varphi.
$$
For convenience, we can define the ``eventually'' $\lozenge$ and ``always'' $\square$ operators as $\lozenge_{[a,b]} \varphi \triangleq \texttt{True} \textbf{U}_{[a,b]} \varphi$ and $\square_{[a,b]} \varphi \triangleq \lnot (\lozenge_{[a,b]} \lnot \varphi)$ such that $(\sigma,t) \models \lozenge_{[a,b]}{\varphi}$ if and only if $\varphi$ is satisfied at least once within a time window of length $b-a$, $a$ time units from $t$ while $(\sigma,t) \models \square_{[a,b]}{\varphi}$ requires that $\varphi$ should always be satisfied within that time window.
\subsection{Random Convex Program}\label{sec:RCP}
The reliability of the proposed approach is based on the theory of random convex programs (RCP).
Let $P[K]$ denote a (minimization) optimization problem with a known objective function and constraint set $K$, and let $\text{Obj}[K]$ denote the optimal objective value of $P[K]$. A constraint $k$ is a supporting constraint if $\text{Obj}[K\backslash \set{k}]<\text{Obj}[K]$.
The setup for an RCP is the following:
\begin{equation}\label{eq:RCP}
  \begin{aligned}
\min \;&J(\alpha)\;\\
\rm{s.t.}& \alpha\in Q(\delta_i), \forall {\delta_1},...,{\delta_N}\;\textrm{i.i.d samples of } \delta,
\end{aligned}
\end{equation}
where $\alpha\in\mathbb{R}^n$, $Q(\delta_i)\subseteq \mathbb{R}^n$ is a convex set determined by $\delta_i$, and $J(\alpha)$ is convex. $\delta\in\Delta$ is a random variable in the space $\Delta$ and $\left\{\delta_i\right\}$ are independently identically distributed samples of $\delta$. Each $\delta_i$ would pose a convex constraint on $\alpha$. If we randomly draw $N$ samples of $\delta$, and denote it as $\omega\doteq\delta_{1:N}\in\Delta^N$, then let $Q(\omega)\doteq\bigcap\nolimits_{i=1}^{N}{Q(\delta_i)}$, define
\begin{equation}\label{eq:V_def}
  V^*(\omega) = \mathbb{P}\left\{\delta\in\Delta:\text{Obj}([Q(\omega),Q(\delta)])>\text{Obj}[Q(\omega)]\right\},
\end{equation}
 which is the probability that an additional sample added on top of $\omega$ would change the objective value of the original optimization with constraints determined by $\omega$. \cite{calafiore2010random} gives upper bound on $\mathbb{P}(V^*(\omega)\ge\epsilon)$ given $1\ge\epsilon>0$ for a randomly drawn sequence of samples $\omega$. First, we recall the following relevant lemma from \cite{calafiore2010random}:
 \begin{lem}\label{lemma:RCP}
  Consider the random convex program in \eqref{eq:RCP} where $\alpha\in\mathbb{R}^n$. When $N\ge \zeta$, $
  \mathbb{P}\left\{\omega\in\Delta^N:V^*(\omega)>\epsilon\right\}\le\Phi(\epsilon,\zeta-1,N)\le\Phi(\epsilon,n,N),
$
where $\zeta$ is the Helly's dimension denoting the maximum number of supporting constraints, which is bounded by $n+1$.
  \begin{equation}\label{eq:Phi}
    \Phi(\epsilon,k,N)=\sum\limits_{j=0}^{k}\binom{N}{j}\epsilon^j(1-\epsilon)^{N-j}
  \end{equation}
is the cumulative distribution of a binomial random variable, that is, the probability of getting no more than $k$ successes in $N$ Bernoulli experiments with success probability $\epsilon$.
 \end{lem}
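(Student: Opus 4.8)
The plan is to read Lemma~\ref{lemma:RCP} as the standard sampling bound for random convex programs and to reconstruct its proof from three ingredients: (i) a structural fact bounding how many constraints can actually ``pin down'' the optimizer, (ii) the i.i.d.\ symmetry of $\omega=\delta_{1:N}$, and (iii) a combinatorial count that assembles the tail $\Phi$. I would treat \eqref{eq:Phi} (the binomial-c.d.f.\ identity) as given and focus on the probability estimate, for $N\ge\zeta$.

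First I would establish the support-constraint structure. Because each $Q(\delta)$ is a convex subset of $\mathbb{R}^n$ and $J$ is convex, the optimizer of $P[Q(\omega)]$ is determined by a subset of at most $\zeta\le n+1$ of the $N$ constraints: this is exactly where Helly's theorem enters, giving that if more than $n+1$ constraints were each supporting, one could extract $n+1$ of them whose intersection already fixes the optimal value. Concretely, there is a (measurable) selection $I(\omega)\subseteq\{1,\dots,N\}$ with $|I(\omega)|\le\zeta$ and $\mathrm{Obj}[Q(\delta_{I(\omega)})]=\mathrm{Obj}[Q(\omega)]$, $\alpha^*(\delta_{I(\omega)})=\alpha^*(\omega)$; moreover every $\delta_j$ with $j\notin I(\omega)$ is non-influential, since $Q(\omega)\subseteq Q(\delta_{I(\omega)})\cap Q(\delta_j)\subseteq Q(\delta_{I(\omega)})$ forces all three objectives to coincide.

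Next I would reduce to a fixed index set and condition. Partition $\Delta^N$ (up to a null set) by the value of $I(\omega)$; by the i.i.d.\ property the probability of $\{I(\omega)=I\}\cap\{V^*(\omega)>\epsilon\}$ depends only on $k=|I|$, so the sum over $\binom{N}{k}$ index sets collapses to $\binom{N}{k}$ copies of the case $I=\{1,\dots,k\}$. On the non-degenerate event (unique optimizer; the general case is handled by a perturbation/limiting argument), $\alpha^*(\omega)=\alpha^*(\delta_I)$ and $V^*(\omega)$ coincides with the point-violation probability $v(\alpha^*(\delta_I))=\mathbb{P}\{\delta:\mathrm{Obj}[Q(\delta_I),Q(\delta)]>\mathrm{Obj}[Q(\delta_I)]\}=\mathbb{P}\{\delta:\alpha^*(\delta_I)\notin Q(\delta)\}$, which is a function of $\delta_I$ alone. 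Conditioning on $(\delta_1,\dots,\delta_k)$: for $I$ to be the active set the remaining $N-k$ i.i.d.\ samples must be non-violating for $\alpha^*(\delta_{1:k})$, an event of conditional probability $(1-v(\alpha^*(\delta_{1:k})))^{N-k}$. Weighting by $\mathbf{1}\{v(\alpha^*(\delta_{1:k}))>\epsilon\}$, integrating over $(\delta_1,\dots,\delta_k)$, and summing $k$ over the admissible range yields a partial binomial sum; matching it to $\sum_{j=0}^{\zeta-1}\binom{N}{j}\epsilon^j(1-\epsilon)^{N-j}=\Phi(\epsilon,\zeta-1,N)$ gives the first inequality, and $\Phi(\epsilon,\zeta-1,N)\le\Phi(\epsilon,n,N)$ is immediate from $\zeta\le n+1$ and monotonicity of the partial sum in its truncation index.

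The step I expect to be the real obstacle is obtaining the \emph{tight} tail $\Phi(\epsilon,\zeta-1,N)$ rather than the crude bound $\binom{N}{\zeta}(1-\epsilon)^{N-\zeta}$ that the naive version of the last step produces. Closing this gap needs two refinements: a careful treatment of \emph{degenerate} instances (non-unique optimizer, active set of size $<\zeta$), which I would dispatch by a perturbation argument reducing to the non-degenerate case; and a sharper accounting that, instead of bounding $\mathbf{1}\{v>\epsilon\}$ by $1$, integrates the conditional violation probability against its own distribution (the Campi--Garatti-style argument), which is precisely what converts $(1-\epsilon)^{N-k}$ summed over $k$ into the full binomial tail. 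For that refined bookkeeping I would lean directly on \cite{calafiore2010random} rather than re-deriving it here.
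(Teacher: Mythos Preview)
The paper does not actually prove Lemma~\ref{lemma:RCP}: immediately after stating it, the text says ``This is Theorem 3.3 in \cite{calafiore2010random}'' and moves on. Your proposal is therefore strictly more than what the paper does---you sketch the Calafiore argument (Helly-type bound on the number of support constraints, i.i.d.\ exchangeability over index sets, conditioning on the active set, then the Campi--Garatti refinement to reach the tight binomial tail), and you correctly flag that the sharp $\Phi(\epsilon,\zeta-1,N)$ rather than the crude $\binom{N}{\zeta}(1-\epsilon)^{N-\zeta}$ is the delicate part and defer to \cite{calafiore2010random} for it. This is exactly the right reference and the right outline, so your proposal is consistent with---indeed more detailed than---the paper's own treatment.
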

 This is Theorem 3.3 in \cite{calafiore2010random}, which shows that the result of the RCP is likely to be true for unseen $\delta$ drawn from the same distribution under large $N$ and small $n$. We will revisit this lemma in Section \ref{sec:SVM} to prove probabilistic correctness of the proposed method.

\section{Approach}

Given a dynamical system $\Sigma$, a specification about the initial condition $\varphi_0$, a controller $\varphi_c$ and a performance specification $\varphi_p$, the goal is to verify that $\varphi_p$ is satisfied by all the runs of the system, with high probability, when the control input and initial condition satisfy $\varphi_0$ and $\varphi_c$. However, since the system is interacting with the environment, $\varphi_c$ and $\varphi_0$ alone typically do not imply $\varphi_p$, i.e., the verification fails trivially assuming that the environment can choose arbitrary behaviors. Therefore, we look for an assumption of the environment $\varphi_e$ that is correct (or at least correct with high probability) such that
$\Sigma\wedge\varphi_0\wedge\varphi_e\wedge\varphi_c\Rightarrow\varphi_p$, where the implication is understood as
\begin{equation*}
    \forall (x(t),u(t),d(t)),\left( {(x(t),u(t),d(t)) \models \Sigma \wedge {\varphi _0} \wedge {\varphi _c} \wedge {\varphi _e}} \right) \Rightarrow \left( {(x(t),u(t),d(t)) \models {\varphi _p}} \right).
\end{equation*}

We propose a framework that learns $\varphi_e$ by using a falsification procedure in the loop, in addition to data of interaction between the system and the environment. Figure  \ref{fig:flow_chart} depicts an overview of the framework being proposed. The falsification module takes a fixed controller, the reactive bound of the environment, and the specification $\varphi_p$ as inputs, and either returns traces of the system evolution and the environment behavior that falsify $\varphi_p$ under the given controller, or returns a flag saying that no falsifying trace could be found.
\begin{wrapfigure}{r}{0.55\textwidth}
    \centering
    \includegraphics[scale =0.43]{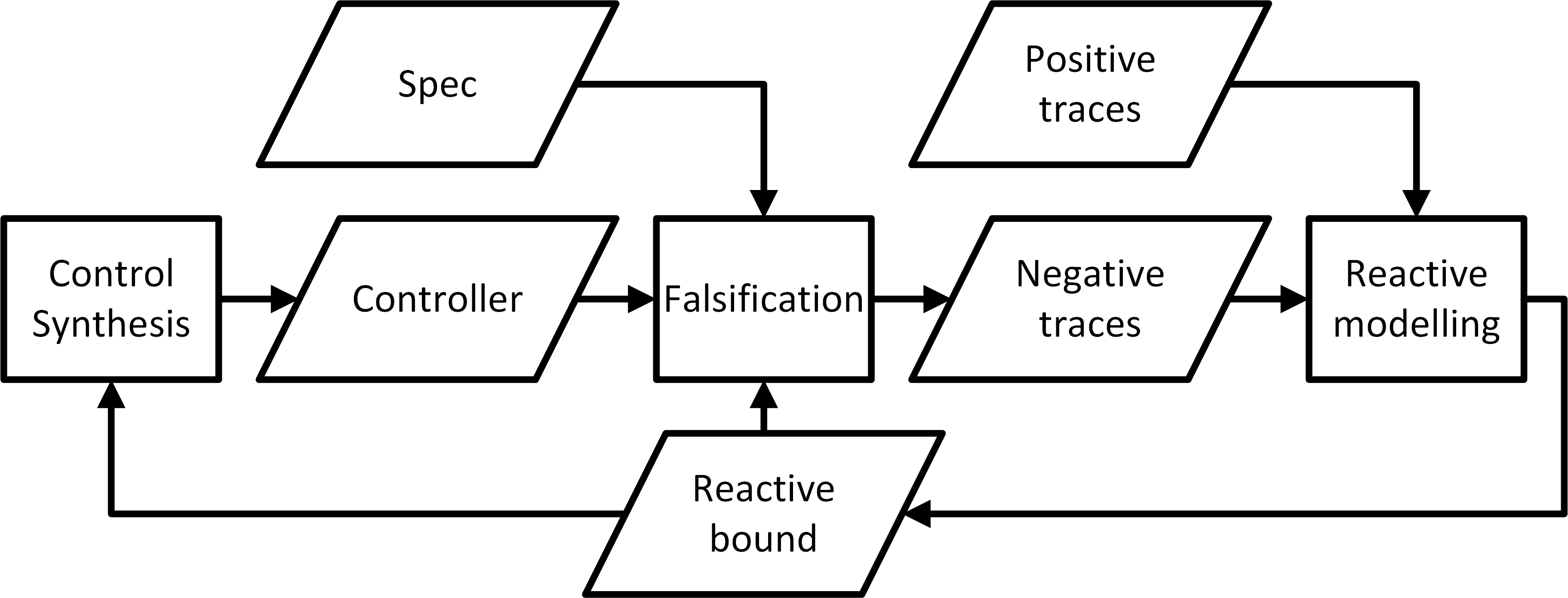}
    \caption{Verification with reactive modelling of the environment}
    \label{fig:flow_chart}
\end{wrapfigure}
In particular, we use the tool S-Taliro as an oracle falsifier, which uses stochastic sampling and can handle STL formulae, see \cite{annpureddy2011s} for details. When the falsification returns no trace, the procedure terminates and the verification is successful.
When falsifying traces are found, they are fed into the reactive modelling module where the positive traces collected from the actual interaction with the environment and the negative traces from the falsification module go through a classification process and the output is a reactive bound that maps the agent's state $x$ to a set of possible behaviors $d$ for the environment, denoted as ${S_d}(x)$. To obtain a bound on the influence of $x$ on the environment response $d$, we want to find a function $h:\mathcal{X}\times\mathcal{D}\to\mathbb{R}$ such that $h(x,d)\geq 0$ indicates that $d$ is possible under $x$, and $h(x,d)<0$ indicates that $d$ is not possible under $x$. The set of possible behaviors for the environment given by the reactive bound can be represented as
\begin{equation}\label{eq:reactive_bound}
S_d(x)=\left\{d\mid h(x,d)\ge0\right\}.
\end{equation}
\begin{rem}
Not every snapshot from the falsifying trace is included in the negative data.
We use an ad-hoc selection scheme to pick out `important' snapshots based on criterion such as the distance between the robots (Section \ref{sec:robot}), and lateral position for the AV (Section \ref{sec:LC}).
\end{rem}
\begin{rem}
Since the data is in the format of snapshots, $\varphi_e$ is limited to the form $\square \varphi$ where $\varphi$ has no dependence on time. One could use a parameterized form for $\varphi_e$ and project the traces to the parameter space, such as in \cite{NIPS2018Marcell}, but this limits  $\varphi_e$ to having only monotonic atoms.
\end{rem}
\subsection{$\mathcal{L}_1$ piecewise SVM for reactive modelling}\label{sec:SVM}
Given the positive and negative traces, we need to learn an indicator function $h$, which then gives rise to the reactive bound. This can be solved as a classification problem. There are numerous classification tools in the literature, such as neural networks and logistic regression. For the reactive modelling problem, in addition to good classification accuracy, the following two requirements are critical: 1) the probability of false negative should be low, even for the unseen data, and 2) the classification result $h$ should have an analytic form for its classification boundary.
For the first requirement, note that $h(x,d)<0$ indicates that $d$ will not happen under $x$, and the verification process will ignore such environment input under $x$.
Therefore, the probability of false negative should be very low to guarantee the correctness of the reactive bound and consequently guaranteeing safety.
The reliability analysis of the proposed approach is based on the theory of random convex program (RCP), which we discuss in detail in Section \ref{sec:RCP_proof}.
For the bound in Lemma \ref{lemma:RCP}, we would like the number of parameters for the classifier to be small -- this prevents overfitting, and enables us to provide better probabilistic guarantees.
The second requirement comes from the fact that the reactive bound will be used explicitly during verification and control synthesis. Therefore, its explicit form should be known.

Due to the two requirements, we choose Support Vector Machines (SVMs) with explicit features as the classification method.
SVMs fit into the setup of Lemma \ref{lemma:RCP} if the positive data is sampled from an i.i.d. distribution.
In particular, we propose an novel expressive $\mathcal{L}_1$ piecewise SVM, which is based on the work on $\mathcal{L}_1$ SVM in \cite{chen2018modelling}.
\cite{chen2018modelling} showed that with a proper cost function, the following optimization solves the $\mathcal{L}_1$ SVM:
\begin{equation}\label{eq:L1_SVM}
\begin{aligned}
\mathop {\min }\limits_{v,c,M} & k^{\intercal}M\\
\rm{s.t.} &\left\|v\right\|\le 1,\;y_i=1\Rightarrow M_i\ge0,\\
&\forall i=1,...,N,~{v^\intercal}{\phi(z_i)}y_i - {M_i} + c y_i = 0,\\
\end{aligned}
\end{equation}
where $z_i\in\mathbb{R}^m$ is the $i$-th data point and $\phi:\mathbb{R}^m\to\mathbb{R}^p$ maps the data to the feature space, the flag $y_i=1$ for positive data points and $y_i=-1$ for negative data points. $v\in\mathbb{R}^p$ is the support vector and $c$ is the offset, therefore the Helly dimension is $p+2$.  $M\in\mathbb{R}^N$ is the slack vector and $k$ is the cost vector with
$k_i>0$ for all $i$.
See section~\ref{app:derivation} for more detail. It is required that all the positive data points are correctly classified (no false negatives), which is needed for the reliability proof.

However, the $\mathcal{L}_1$ SVM suffers from the lack of expressibility, especially for high dimensional $d$.
We propose two improvements on the SVM: 1) Algorithm \ref{alg:SVM_multiple_hyperplanes}, which generates multiple separating hyperplanes and represents the positive data region as a polytope. 2) Based on Algorithm \ref{alg:SVM_multiple_hyperplanes}, we introduce Algorithm \ref{alg:piecewise_SVM}, which allows a piecewise structure for the SVM where a different polytope represents the positive data in each region, and automatically synthesises the piecewise regions.

The original SVM generates one separating hyperplane in the feature space, which results in a reactive bound with a smooth boundary in $\mathcal{X}\times\mathcal{D}$. In order to make the reactive bound more expressive, we propose a piecewise $\mathcal{L}_1$ SVM with multiple separating hyperplanes. SVM with multiple separating hyperplanes is achieved with the following greedy algorithm:
\begin{algorithm}[H]
\centering
    \begin{algorithmic}
            \State  Input: positive data $\phi_p$, negative data $\phi_n$
            \State $\phi^{active}_n \gets \phi_n$
            \For{i=1:$N_h$}
            \State Perform $\mathcal{L}_1$ SVM with $\phi_p$ and $\phi^{active}_n$, get $v^i$, $c^i$, slacks $M_p$, $M_n$
            \State $\phi_n^{active}=\left\{\phi_n^{active}(j)|M_n(j)\ge \epsilon\right\}$
             \EndFor
    \end{algorithmic}
    \caption{$\mathcal{L}_1$ SVM with multiple separating hyperplanes}
    \label{alg:SVM_multiple_hyperplanes}
    \end{algorithm}
$N_h$ is the number of separating hyperplanes, $\phi_n^{active}$ is the set of negative data points that are close to the farthest separating hyperplane, $\epsilon$ is the threshold for picking $\phi_n^{active}$ and $M_p$ and $M_n$ are the slacks for positive and negative data. Each SVM computation generates one hyperplane with $v^i$ and $c^i$ and the indicator function is $h(z)=\mathop{\min}\limits_{i=1,.,N_h}\left\{{(v^i)^\intercal z+c^i}\right\}$.
We can further improve the expressibility by introducing a piecewise structure , which is particularly helpful when the problem itself has a piecewise structure, as demonstrated in Section \ref{sec:robot}. Moreover, we develop an auto-tuning piecewise SVM that adjusts the dividing point automatically based on the data by the use of membership functions.

For clarity, we present the piecewise SVM with 2 regions, but note that it can be easily extended to cases with more than 2 regions. Let $g:\mathbb{R}^n\to\mathbb{R}$ be a scalar function and $\kappa$ be a scalar variable. We will divide the state space by the threshold $g(z)=\kappa$. First, define the membership functions using the sigmoid:
\begin{equation*}
    m_1(z,\kappa)=\frac{1}{1+\exp(\gamma(g(z)-\kappa))}, ~ m_2(z,\kappa)=\frac{\exp(\gamma(g(z)-\kappa))}{1+\exp(\gamma(g(z)-\kappa))},
\end{equation*}
where $\gamma$ is a tuning parameter that controls the steepness of the sigmoid.
Note that $m_1(z,\kappa)+m_2(z,\kappa)=1$. When there are $d>2$ regions, one simply construct $d$ membership functions that are nonnegative and add up to 1. With 2 regions, the original feature is extended to
$\bar{\phi}(z)=[m_1(z,\kappa)\cdot\phi(z);m_2(z,\kappa)\cdot\phi(z)]$. We then perform $\mathcal{L}_1$ SVM with this new feature vector.

Once the SVM is trained, notice that by \eqref{eq:L1_SVM}, $M_i=v^\intercal \bar{\phi}(z_i)y_i+dy_i$, taking derivative of the objective function over $\kappa$, we have:
\begin{equation*}
    \frac{{d({k^\intercal}M)}}{{d\kappa }} = \sum\nolimits_i {{k_i}{y_i}\frac{{d({v^\intercal}\bar \phi ({z_i}))}}{{d\kappa }}}  = \sum\nolimits_i {{k_i}{y_i}\left( {{v_{1:p}}^\intercal\phi ({z_i})\frac{{\partial{m_1}}}{{\partial\kappa }} + {v_{p + 1:2p}}^\intercal\phi ({z_i})\frac{{\partial{m_2}}}{{\partial\kappa }}} \right)} ,
\end{equation*}
\begin{equation*}
    {\left. {\frac{{\partial{m_1}}}{{\partial\kappa }}} \right|_z} =  - {\left. {\frac{{\partial{m_2}}}{{\partial\kappa }}} \right|_z} = \frac{{\gamma {m_1(z,\kappa)}}}{{1 + \exp (\gamma (g(z) - \kappa ))}},
\end{equation*}
so we obtain the analytic form of the gradient of the objective function over $\kappa$.
The overall algorithm alternates between the $\mathcal{L}_1$ SVM and optimizing over $\kappa$ by gradient descent, as in Algorithm~\ref{alg:piecewise_SVM}.
\begin{rem}
The setup for $\mathcal{L}_1$ SVM allows for piecewise cost function of $M$ by spliting $M=M^+ + M^-$ with $M^+\ge 0,M^-\le 0$, see \cite{chen2018modelling} for detail. In the gradient descent step, we maintain the constraint by assigning a large penalty on $M^-$.
\end{rem}

\begin{algorithm}[H]
\centering
    \begin{algorithmic}
            \State  Initialize $\kappa$
            \For{iter=1:T}
            \State Compute membership functions $m_1$, $m_2$
            \State Perform Algorithm \ref{alg:SVM_multiple_hyperplanes} with $\bar{\phi}(z)=[m_1(z)\cdot\phi(z);m_2(z)\cdot\phi(z)]$
            \State Perform gradient descent to optimize $\kappa$
             \EndFor
    \end{algorithmic}
    \caption{Auto-tuning piecewise $\mathcal{L}_1$ SVM}
    \label{alg:piecewise_SVM}
    \end{algorithm}
\vspace{-0.3cm}
\subsection{Reliability analysis with RCP}\label{sec:RCP_proof}
Next, we provide reliability analysis for the reactive bound. For the ordinary $\mathcal{L}_1$ SVM, we have:
\begin{thm}
\label{thm:RCP}
Given a positive data set with $N$ points drawn i.i.d. from a fixed (not necessarily known) distribution, and a negative data set, let $p$ be the dimension of the feature vector $\phi(z)$, $p+2<N$, the $\mathcal{L}_1$ SVM in \eqref{eq:L1_SVM} is always feasible. Denote the solution as $[v,c,M]$, which satisfies $v^\intercal \phi(z_i)+c\ge 0,$ for all positive data points. Then for an unseen data point $z_{N+1}$ from the same distribution, for any $0\le\epsilon\le1$, we have:
\begin{equation}
    \mathbb{P}\left\{    \mathbb{P}\left\{\neg( v^\intercal \phi(z_{N+1})+c>0)\right\}>\epsilon       \right\}\le \sum\nolimits_{j = 0}^{p+2} \binom{N}{j}{{\epsilon^j}{{(1 - \epsilon)}^{N - j}}}.
\end{equation}
\end{thm}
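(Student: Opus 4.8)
The argument has three parts: feasibility, a reduction of \eqref{eq:L1_SVM} to the random convex program template \eqref{eq:RCP}, and an application of Lemma~\ref{lemma:RCP}. Feasibility is immediate: the point $(v,c,M)=(0,0,0)$ satisfies $\|v\|\le 1$, the sign constraints $M_i\ge 0$ on the positive points, and each equality $v^\intercal\phi(z_i)y_i-M_i+cy_i=0$. The second assertion --- that the solution obeys $v^\intercal\phi(z_i)+c\ge 0$ on every positive sample --- is just the combination, for $y_i=1$, of the equality $M_i=v^\intercal\phi(z_i)+c$ with the constraint $M_i\ge 0$. That the infimum is attained and finite (rather than $-\infty$) is where the ``proper'' choice of the cost vector $k$ from \cite{chen2018modelling} enters; I would invoke that construction (see also Section~\ref{app:derivation}), or restrict the feasible set to a compact superset of the optimum using that the positive-sample constraints bound $c$ from below. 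I would also impose a tie-break rule, as in the RCP literature, so that the optimizer $\alpha^*=(v^*,c^*)$ is unique.

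For the probabilistic bound, take the random element $\delta$ of Lemma~\ref{lemma:RCP} to be a single positive sample $z$ drawn from the fixed distribution, so $\omega=z_{1:N}$ is the positive training set while the negative set is held fixed. After eliminating the slacks through the equality constraints, the genuine decision variable is $\alpha=(v,c)\in\mathbb{R}^{p+1}$, and a positive sample $z$ contributes exactly the convex constraint set $Q(z)=\{(v,c):\|v\|\le 1,\ v^\intercal\phi(z)+c\ge 0\}$. Thus \eqref{eq:L1_SVM} is an instance of \eqref{eq:RCP} in $n=p+1$ variables, whose Helly dimension satisfies $\zeta\le n+1=p+2$, and the hypothesis $p+2<N$ ensures $N\ge\zeta$ so that Lemma~\ref{lemma:RCP} applies.

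It then remains to identify the ``false negative'' event with the quantity controlled by Lemma~\ref{lemma:RCP}. With a unique optimizer $\alpha^*(\omega)$, the standard dichotomy holds: appending the constraint $Q(\delta)$ of a fresh sample leaves the optimal value unchanged when $\alpha^*(\omega)\in Q(\delta)$ (the optimizer stays feasible and optimal), and strictly increases it otherwise (the new optimizer is a different point of $Q(\omega)$, hence has strictly larger objective by uniqueness). Hence the set appearing in \eqref{eq:V_def} equals $\{\delta:\alpha^*(\omega)\notin Q(\delta)\}=\{\delta:(v^*)^\intercal\phi(\delta)+c^*<0\}$, which --- up to the measure-zero surface $(v^*)^\intercal\phi(z)+c^*=0$, on which I would assume or arrange the distribution to place no mass --- is exactly the set of fresh points that would be false negatives. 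Therefore $V^*(\omega)=\mathbb{P}\{\neg((v^*)^\intercal\phi(z_{N+1})+c^*>0)\}$, and Lemma~\ref{lemma:RCP} yields $\mathbb{P}\{V^*(\omega)>\epsilon\}\le\Phi(\epsilon,\zeta-1,N)$. Since $\zeta-1\le p+1\le p+2$ and $\Phi(\epsilon,\cdot,N)$, being the value of a binomial CDF, is non-decreasing in its second argument, this is at most $\Phi(\epsilon,p+2,N)=\sum_{j=0}^{p+2}\binom{N}{j}\epsilon^j(1-\epsilon)^{N-j}$, which is the assertion.

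The main obstacle is making the reduction in the second paragraph genuinely legitimate. Template \eqref{eq:RCP} presumes a fixed objective $J(\alpha)$, whereas after eliminating the slacks the objective $k^\intercal M$ carries a dependence on the random positive samples through the terms $k_i\,(v^\intercal\phi(z_i)+c)$ with $y_i=1$. I would resolve this by showing the dependence is inessential for the count of supporting constraints: either because, under the ``proper'' cost of \cite{chen2018modelling}, the positive-sample slacks sit at their lower bound and contribute only a constant, so the effective objective is a function of the fixed negative data alone, or by re-deriving the scenario/RCP violation bound directly for the map $\omega\mapsto\alpha^*(\omega)$, whose output is determined by at most $p+2$ of its inputs via Helly's theorem. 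Establishing which of these applies to the precise formulation in \eqref{eq:L1_SVM} is the delicate point; the rest --- feasibility, the description of $Q(\cdot)$, the identification of the violation event, and the monotonicity step --- is routine.
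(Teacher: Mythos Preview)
Your proposal follows the same route as the paper: check feasibility via the origin, eliminate the slacks $M$ through the equality constraints so that the effective decision variable is $(v,c)\in\mathbb{R}^{p+1}$, bound the Helly dimension by $p+2$, and then invoke the Calafiore bound recorded as Lemma~\ref{lemma:RCP}. The only structural difference is that the paper states the Helly-dimension step by citing Theorem~\ref{thm:helly_dim} (which is designed for the situation where some auxiliary variables appear only in the objective and in non-random constraints), whereas you reach the same bound directly from $\zeta\le n+1$ with $n=p+1$; the two yield the same conclusion here. The obstacle you single out at the end---that after elimination the objective acquires the terms $k_i(v^\intercal\phi(z_i)+c)$ from the random positive samples and so does not literally match the fixed-$J$ template of~\eqref{eq:RCP}---is genuine, and the paper's short proof does not address it either: it simply asserts the Helly bound and defers to Theorem~3.3 of~\cite{calafiore2010random}. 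Your flagging of this subtlety, and the candidate resolutions you sketch, therefore go beyond what the paper's own argument supplies rather than falling short of it.
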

See Appendix \ref{sec:RCP_thm_proof} for proof.
Theorem \ref{thm:RCP} gives an upper bound for the probability of the probability of misclassification for unseen data to be higher than a threshold, which decreases with the size of the dataset $N$ and increases with the feature dimension $p$.
For SVM with $N_h$ hyperplanes, we provide the following corollary.
\begin{cor}
Given the condition in Theorem \ref{thm:RCP}, the $\mathcal{L}_1$ SVM with $N_h$ separating hyperplanes is always feasible and define
$\bar{\epsilon}(k,N)=\mathop{\min}\limits_{0\le\epsilon\le 1}\epsilon+\Phi(\epsilon,k,N)(1-\epsilon)$, where $\Phi$ is defined in \eqref{eq:Phi}, then
for any $0\le\epsilon'\le1$,
\[ \mathbb{P}\left\{    \mathbb{P}\left\{\neg\left(\bigwedge\limits_{i=1}^{N_h} (v^i)^\intercal \phi(z_{N+1})+c^i>0\right)\right\}>\epsilon'\right\}\le \frac{\bar{\epsilon}(N,p+2)N_h}{\epsilon'} \]
\label{thm:multi_hyperplane_RCP}
\end{cor}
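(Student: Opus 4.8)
The plan is to derive the $N_h$-hyperplane bound from $N_h$ separate invocations of Theorem~\ref{thm:RCP}, glued together by a union bound over the fresh sample $z_{N+1}$ and a Markov-type averaging over the training set. I would first note that feasibility is immediate: Algorithm~\ref{alg:SVM_multiple_hyperplanes} runs the $\mathcal{L}_1$ SVM~\eqref{eq:L1_SVM} $N_h$ times, each time with the \emph{full} positive set and some subset of the negatives, and since $p+2<N$ each such instance is feasible by Theorem~\ref{thm:RCP}. Moreover, because~\eqref{eq:L1_SVM} enforces $M_j\ge 0$ on the positive points, the $i$-th returned pair $(v^i,c^i)$ satisfies $(v^i)^\intercal\phi(z_j)+c^i\ge 0$ for every positive training point $z_j$.

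Next, fix $i$ and apply Theorem~\ref{thm:RCP} to the $i$-th SVM, with the $N$ i.i.d.\ positive points playing the role of the random constraints (the feature dimension, hence the Helly dimension $p+2$, is unchanged by which negatives are present): writing $V_i^*\doteq\mathbb{P}\{\neg((v^i)^\intercal\phi(z_{N+1})+c^i>0)\}$, this gives $\mathbb{P}\{V_i^*>\epsilon\}\le\Phi(\epsilon,p+2,N)$ for all $\epsilon\in[0,1]$. I would then convert this tail bound into an expectation bound: since $0\le V_i^*\le 1$, for any $\epsilon\in[0,1]$ we have $\mathbb{E}[V_i^*]=\int_0^1\mathbb{P}\{V_i^*>t\}\,dt\le\epsilon+(1-\epsilon)\,\Phi(\epsilon,p+2,N)$, bounding $\mathbb{P}\{V_i^*>t\}\le 1$ on $[0,\epsilon]$ and $\mathbb{P}\{V_i^*>t\}\le\Phi(t,p+2,N)\le\Phi(\epsilon,p+2,N)$ on $[\epsilon,1]$ (the last inequality because $\Phi(\cdot,p+2,N)$, the probability of at most $p+2$ binomial successes, is nonincreasing in its first argument). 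Minimizing over $\epsilon$ gives $\mathbb{E}[V_i^*]\le\bar{\epsilon}(p+2,N)$, i.e.\ $\bar{\epsilon}$ with the same arguments that appear in the $\Phi$ of Theorem~\ref{thm:RCP}.

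To finish, I would observe that for \emph{every} realization of the training data the combined classifier $h=\min_i\{(v^i)^\intercal\phi(\cdot)+c^i\}$ errs on $z_{N+1}$ only if some hyperplane does, so a union bound over $z_{N+1}$ gives $B\doteq\mathbb{P}\{\neg(\bigwedge_{i=1}^{N_h}(v^i)^\intercal\phi(z_{N+1})+c^i>0)\}\le\sum_{i=1}^{N_h}V_i^*$ pointwise. Hence $\{B>\epsilon'\}\subseteq\{\sum_i V_i^*>\epsilon'\}$, and Markov's inequality applied to the nonnegative random variable $\sum_i V_i^*$, combined with the previous paragraph, yields $\mathbb{P}\{B>\epsilon'\}\le\mathbb{E}[\sum_i V_i^*]/\epsilon'=\sum_i\mathbb{E}[V_i^*]/\epsilon'\le N_h\,\bar{\epsilon}(p+2,N)/\epsilon'$, which is the stated inequality; note the $N_h/\epsilon'$ shape comes from this Markov step on $\sum_i V_i^*$, not from a union bound over the $N_h$ tail events.

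The step I expect to require the most care is the per-hyperplane use of Theorem~\ref{thm:RCP}: the negatives handed to the $i$-th SVM are produced by the earlier SVMs and are therefore themselves functions of the positive sample, so one must check that this data-dependent ``side information'' does not invalidate the RCP guarantee. The resolution I would use is that the bound of Lemma~\ref{lemma:RCP} depends on the problem only through the i.i.d.\ structure of the constraints it quantifies over and an upper bound ($p+2$) on the number of supporting constraints; any fixed set of negatives leaves the positive points i.i.d.\ and leaves the Helly dimension at $p+2$, so $\Phi(\epsilon,p+2,N)$ bounds $\mathbb{P}\{V_i^*>\epsilon\}$ conditionally on the realized negatives and hence unconditionally. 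Everything else is routine, provided one keeps $V_i^*\in[0,1]$ so that the integral representation of $\mathbb{E}[V_i^*]$ and the split at $\epsilon$ are legitimate.
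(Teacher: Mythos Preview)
Your proof is correct and follows essentially the same route as the paper's: per-hyperplane application of Theorem~\ref{thm:RCP}, conversion of the tail bound $\mathbb{P}\{V_i^*>\epsilon\}\le\Phi(\epsilon,p+2,N)$ into the expectation bound $\mathbb{E}[V_i^*]\le\bar{\epsilon}(p+2,N)$, then the pointwise union bound $B\le\sum_i V_i^*$ followed by Markov's inequality. You in fact give more detail than the paper---the integral representation of $\mathbb{E}[V_i^*]$, the monotonicity of $\Phi$ in its first argument, and the discussion of why the data-dependent negatives do not spoil the RCP guarantee---all of which the paper leaves implicit.
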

See Appendix \ref{sec:RCP_cor_proof} for proof.
\begin{rem}
The auto-tuning piecewise SVM in Algorithm \ref{alg:piecewise_SVM} changes the optimization problem every time it updates $\kappa$, which does not allow us to directly apply Theorem \ref{thm:RCP}.
To overcome this, a simple solution is to separate the positive data points into two batches, using the first batch to find a good separation of the state space, i.e., find a good $\kappa$, and the second batch to obtain the reactive bound while fixing $\kappa$.
The size of the second batch determines the probability of misclassification.
\end{rem}
\section{Case Study}
\vspace{-5pt}
\subsection{Multi-robot navigation}\label{sec:robot}
As a toy example, we consider a multi-robot navigation problem consisting of two robots as shown in Fig. \ref{fig:robots}. We denote the positions of the two robots by $p_1, p_2 \in [-l, l]^2 \subset \mathbb{R}^2$. The robots are characterized by the integrator dynamics
$
\dot{p}_i = v_i,
$
where $i \in \set{1,2}$ and $v_i$ satisfies $\left\|v_i\right\|_2\le v^{\max}$ .
The specification for the system is to always maintain distance i.e., it has to satisfy the speficiation $\square_{[0, T]} \texttt{connected}$, where $T$ is the time horizon for the STL specification and $\texttt{connected}$ is a predicate defined by:
$
\texttt{connected} \triangleq \norm{p_1 - p_2} \leq r^{\max}.
$
Here $r^{\max}$ can be thought of as the maximum communication range.
The red robot $R_1$ is the controlled robot and it simultaneously pursues two objects, a moving target $T_1$ (with bounded velocity), and the blue robot $R_2$. It follows a given controller, which in simulation is set to be:
\begin{equation}\label{eq:R1_con}
    v_1={\bf{sat}}_{{v^{\max }}}({k_1}({p_T} - {p_1}) + {k_2}({p_2} - {p_1})),
\end{equation}
with gains $k_1$ and $k_2$, where $p_T$ denotes the position of the target and
${\bf{sat}}_a(x) = x$ if $ \norm{x} \leq a$ and $a \frac{x}{\norm{x}}$ otherwise.
The motion of the blue robot $R_2$ follows a ``black box'' controller, and we would like to learn an over-approximation of its possible behavior as a function of the state. In particular, we pick a controller with a piecewise structure depending on the location $R_2$:
\begin{equation}
\begin{aligned}\label{eq:R2_con}
    {v_2} = {\bf{sat}}_{v^{\max}}\left( {\begin{bmatrix}
    -0.4& -\beta\\ \beta& -0.4
    \end{bmatrix}(p_2-p_1)} +\Delta v_2\right),
    \end{aligned}
\end{equation}
where $\Delta v_2$ is a bounded random noise and $\beta = 1$ if $p_{2,1}(t) \le 0$  and $-1$ otherwise ($p_{2,1}$ denote the $X$ coordinate of $R_2$). The above controller roughly makes $R_2$ spiral counter-clockwise towards $R_1$ on the left half plane, and spiral clockwise on the right.

To initiate the process, we first collect data with simulation by enforcing \eqref{eq:R1_con} and \eqref{eq:R2_con} on $R_1$ and $R_2$ and let $T_1$ move randomly in the state space. The positive data collected consists of tuples of $[p_1,p_2,v_2]$, which contains information about how $R_2$ moves under different states. Recall \eqref{eq:reactive_bound}, in the two robot case, the state $x$ is $[p_1,p_2]$, and the environment input $d$ is $[v_2,v_T]$, where $v_T=\dot{p}_T$ is the velocity of the target, but we do not explicitly learn a reactive model of $v_T$ and the only constraint for which is the norm bound. In the falsification process, the falsifier can choose $v_2$ and $v_T$ while $v_1$ follows \eqref{eq:R1_con}. When no reactive bound is in place, the only constraint for $v_2$ is the norm bound $v^{\max}$, and the falsifier can easily find falsifying traces. The falsifying traces then generate negative data points with the same structure as the positive data, which is then fed to the reactive modelling module. The reactive modelling module utilizes the auto-tuning piecewise $\mathcal{L}_1$ SVM algorithm introduced in \ref{sec:SVM} with 3 separating hyperplanes and $g(x)=p_{2,1}$ to construct a reactive bound. We choose features that are linear in $v_2$ so that the resulting reactive bound is a polytope $S_{v_2}([p_1,p_2])\in\mathcal{R}^2$, given $[p_1,p_2]$. The reactive bound is then fed to the falsifier, which would project the raw input of $v_2$ to $S_{v_2}([p_1,p_2])$. Since we construct the features for the SVM such that $S_{v_2}$ is a polytope, the projection is easily solved with quadratic programming.

After 5 iterations of updating the reactive bound, the falsifier cannot find a falsifying trace, which means that the controller is verified under the learned reactive bound. Moreover, it turns out that the threshold $\kappa$ converges to $3\times10^{-3}$, which is very close to the actual threshold at $\kappa=0$.
\begin{figure}[H]
\vspace{-0.3cm}
    \centering
    \includegraphics[width=0.18\textwidth]{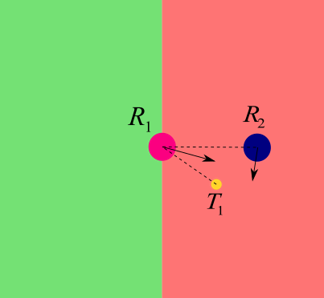}
    \includegraphics[width=0.22\textwidth]{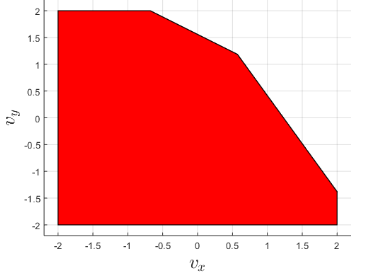}
    \includegraphics[width=0.18\textwidth]{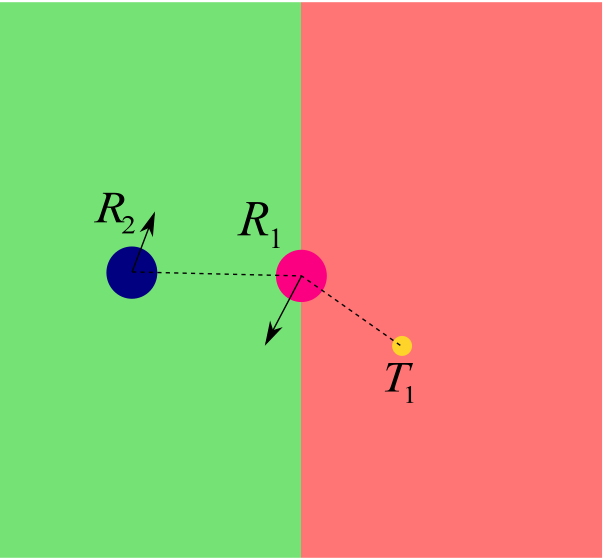}
    \includegraphics[width=0.22\textwidth]{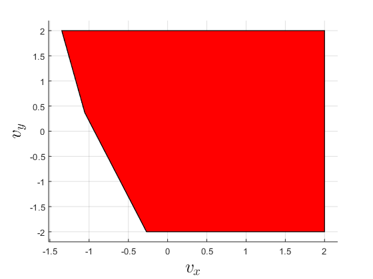}
    \caption{Two robot scenario and Reactive bounds.}
    \label{fig:robots}
    \vspace{-0.4cm}
\end{figure}
\vspace{-5pt}
Fig. \ref{fig:robots} shows two robot positions with corresponding reactive bounds. When $R_2$ is on the right, the reactive bound allows it to spiral clockwise, while the direction of spiralling flips on the left side. But importantly, the worst-case $v_2$, which is to move away from $R_1$ with $v^{\max}$ is not allowed in both cases.
%
%
\vspace{-5pt}
\subsection{Lane Change}\label{sec:LC}
\vspace{-5pt}
A practical application of the proposed method is verification of the lane change control for autonomous driving.
We would like to guarantee with high probability that a given controller can safely finish a lane change within a given horizon.
We consider a scenario as depicted in Fig. \ref{fig:lane_change}, where the autonomous vehicle (AV) attempts to make a lane change with the human driven vehicle (HV) on the back.
\begin{wrapfigure}{r}{0.35\textwidth}
    \centering
    \includegraphics[scale=0.2]{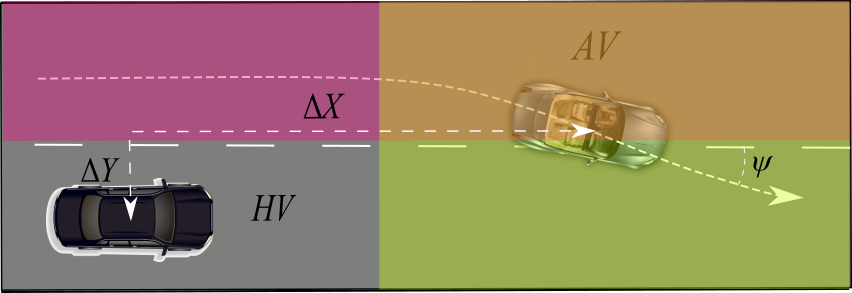}
    \caption{Lane Change Scenario}
    \label{fig:lane_change}
\end{wrapfigure}
The state of the system is
$x = \begin{bmatrix}
    \Delta X & \Delta Y & \Delta v & \psi
    \end{bmatrix}^\intercal$,
    where $\Delta X$ and $\Delta Y$ are the longitudinal and lateral coordinate differences between the two vehicles, $\Delta v$ is the velocity difference and $\psi$ is the heading angle of the AV. The input of AV are the acceration $a_1$ and yaw rate $r_1$, and the input of the HV is the acceleration $a_2$.
    The dynamics is given by:
    \begin{equation}
    \dot{x}=\begin{bmatrix}
    \Delta v&
    v_1 sin(\psi)&
    a_1-a_2&
    r_1
    \end{bmatrix}^\intercal.
\end{equation}

The specification for the problem is to always not collide and keep within the lane, and eventually finish the lane change within horizon $T$. Formally, the specification is expressed in STL:
\begin{equation}\label{eq:LC_spec}
    \square_{[0,T]}(\lnot \mathbf{COL}\wedge\mathbf{LK})\wedge\lozenge_{[0,T]}\mathbf{LC},
\end{equation}

where $\mathbf{COL}$ stands for collision, $\mathbf{LC}$ stands for lane change and $\mathbf{LK}$ stands for lane keeping, which all can be represented as subsets of the state space:
\begin{equation}
\begin{aligned}
    \mathbf{COL} &\Leftrightarrow & |\Delta Y|\le a \wedge |\Delta X|\le b \\
    \mathbf{LC} &\Leftrightarrow & |\Delta Y|\le \epsilon \\
    \mathbf{LK} &\Leftrightarrow & 0.5w-0.5b\ge\Delta Y \ge -1.5w-0.5b,\\
\end{aligned},
\end{equation}
where $a$, $b$ are the length and width of a typical car, $w$ is the width of a lane and $\epsilon \in \mathbb{R}_+$ is a small constant. As an example, we consider a model predictive control scheme with mixed integer programming as the controller for the AV. As shown in Fig. \ref{fig:lane_change}, the AV should stay within the union of the two colored regions within the prediction horizon $T$, which is enforced by the ``big M'' procedure as a mixed integer linear constraint. The MPC controller takes the current value of $a_2$ and assumes an exponential decay within the prediction horizon $a_2(t)=a_2(0)e^{-t/\tau}$, which of course is not accurate, but only a prediction of the future. The lane keeping constraint is also enforced as a linear constraint and the objective function penalizes $\Delta Y$, driving the vehicle to finish the lane change.

The lane change problem was studied in \cite{chen2018modelling}, and we use the same source for positive data, which is from the safety pilot model deployment (SPMD) database with more than 50 million miles of naturalistic driving data \cite{bezzina2015safety}. The feature structure is also inherited from \cite{chen2018modelling}.
Following the procedure shown in Fig. \ref{fig:flow_chart}, the falsification tool starts with simply the physical limit of $a_2$ and tries to falsify the specification in \eqref{eq:LC_spec}, the falsifying traces are then broken into snapshots and treated as negative data. The SVM procedure then generates the reactive bound for $a_2$. In the lane change case, it is not difficult to see that the the safety specification is monotonic w.r.t. $a_2$, i.e., it is always safer for the HV to decelerate. Therefore, the reactive bound for $a_2$ is in the form of an upper bound $a^{\star}_ {\max}(x)$ that changes with the state $x$.

\begin{figure}[H]
    \centering
    \begin{subfigure}[b]{0.22\textwidth}
        \includegraphics[width=\textwidth]{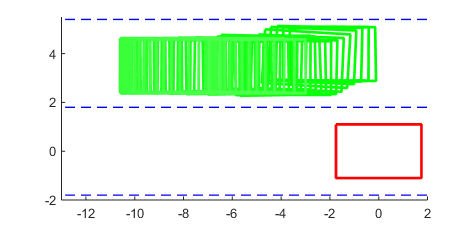}
        \caption{Falsify by blocking}
        \label{fig:LC_fail_block}
    \end{subfigure}
    ~
    \begin{subfigure}[b]{0.22\textwidth}
        \includegraphics[width=\textwidth]{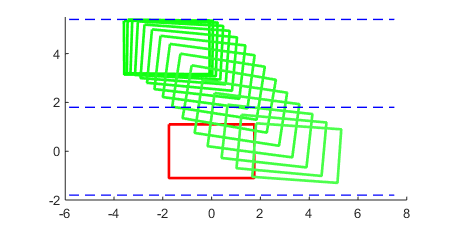}
        \caption{Falsify by collision}
        \label{fig:LC_fail_collision}
    \end{subfigure}
    ~
    \begin{subfigure}[b]{0.4\textwidth}
        \includegraphics[width=\textwidth]{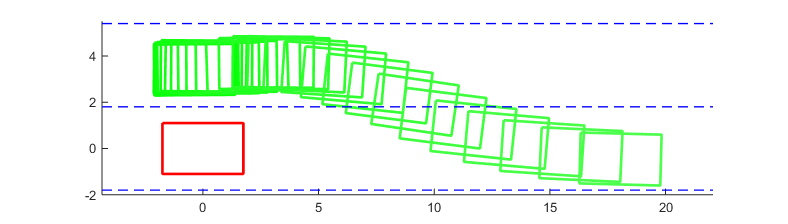}
        \caption{Success run}
        \label{fig:LC_success}
    \end{subfigure}
    \caption{Verification of lane change}\label{fig:LC_verification}
\end{figure}
\vspace{-10pt}
The result of verification for the MPC controller is shown in Fig. \ref{fig:LC_verification}. Without the reactive bound, the falsification procedure is able to falsify the specification by accelerating and blocking the AV from finishing a lane change, as shown in Fig. \ref{fig:LC_fail_block}, and the verification procedure terminates after a maximum iteration number. After 4 iterations, the SVM presented in Section \ref{sec:SVM} generates a reactive bound that makes the falsification infeasible, i.e., verifies that the MPC controller satisfies the specification and a success run is shown in Fig. \ref{fig:LC_success}. However, when we remove the collision avoidance constraint in the MPC controller, the falsification tool finds a falsifying trace by causing a collision with the AV ( Fig. \ref{fig:LC_fail_collision}), thereby providing feedback for the controller design process.

Check \href{github.com/chenyx09/Reactive-modelling}{https://github.com/chenyx09/Reactive-modelling} for the code implementing the proposed method with the two examples.
\\
\textbf{Run-time considerations}
The mean run times over 5 runs are 4977s and 29.6s for the robot problem and for the lane change problem respectively.
\section{Conclusion}

This paper presents a framework that combines falsification and specification learning to learn an over-approximation of the reactive behavior of the environment from real data. There are two key parts of the algorithm, the falsifier and the reactive modelling module. The falsifier can handle specifications written in temporal logic and generate falsifying traces, which is then used by the reactive modelling module together with the positive data to generate the reactive bound. The reliability of the reactive bound is guaranteed by the theory of RCP, which can give probabilistic guarantees determined by the amount of data available. We showed the capability of the proposed framework to handle environment behavior with a piecewise structure as well as demonstrating the result on a practical problem in autonomous driving with real-world human-driving data.
The framework presented here provides a general approach for the diagnosis of autonomous agents interacting with complex environments, such as in the case of human-robot interaction.

\bibliography{mybib}

\section{Appendix}
\subsection{A simple derivation for $\mathcal{L}_1$ SVM}
\label{app:derivation}
Given a data set of features together with their labels $(\phi(z_i), y_i)_{i=1}^N$. Let's say we want to find a hyperplane $\mathcal{H}$ characterized by a normal vector $w$ and an offset $c$ to classify them such that
\begin{enumerate}
    \item \label{req1} all positive data points lie on one side of $\mathcal{H}$
    \item \label{req2} all positive data points lie as close to the hyperplane as possible
    \item \label{req3} negative data points lie as far into or close to the other side of $\mathcal{H}$ as possible
\end{enumerate}
For each $i \in \set{1,2,\ldots,N}$, the distance of $\phi(z_i)$ to the hyperplane $\mathcal{H}$ is equal to $\abs{M_i}$ where $M_i$ is the (unique) solution to the equation
$$
w^T \bigg( \phi(z_i) - M_i \frac{w}{\norm{w}} \bigg) + c  = 0
$$
Then, requiring that all positive data points must all lie on one side of the hyperplane like in requirement~\ref{req1} is equivalent to adding the constraint
\begin{equation}
\label{pos_same}
y_i = 1 \Rightarrow M_i \geq 0
\end{equation}
Actually~\eqref{pos_same} says that all positive data points should lie on the positive side of the hyperplane, but by symmetry this is not a loss of generality.
Consider a linear objective (penalty) function of the form $k^{\intercal} M$ where $M \in \mathbb{R}^N$ has $M_i$ as its $i$-th component. Then requirements \ref{req1}, \ref{req2}, and~\ref{req3} translate to a problem of minimizing $k^{\intercal} M$ with the weight $k$ satisfying $k \succ 0$ (element-wise positive) subject to~\eqref{pos_same}.
\begin{itemize}
    \item For each $i$ such that $y_i = 1$ (and hence $M_i \geq 0$ by requirement \ref{req1} and \eqref{pos_same}), $k_i$ is required to be strictly positive to penalize $\mathcal{H}$ being far away from $\phi(z_i)$ (requirement~\ref{req2}).
    \item For each $i$ such that $y_i = -1$ (negative data points), having $k_i > 0$ will ``force'' $M_i$ to be small (i.e., closer to $-\infty$ on the real axis) to minimize the penalty function and in the process, satisfying requirement~\ref{req3}.
\end{itemize}
A convex relaxation of the above optimization problem is
\begin{equation}\label{eq:simpler_SVM}
\begin{aligned}
\mathop {\min }\limits_{v,d,M} & k^{\intercal}M\\
\rm{s.t.}~&\norm{v} \le 1 \\
& {v^\intercal}{\phi(z_i)} - {M_i} + d = 0 \\
& y_i=1\Rightarrow M_i\ge0
\end{aligned}
\end{equation}
The two optimization problems are equivalent if the optimal solution is negative. Optionally, if we want to weigh correctly or wrongly classified negative data points differently, we can modify the optimization problem to.
\begin{equation}\label{eq:SVM}
\begin{aligned}
\mathop {\min }\limits_{v,d,M} & k_p^{\intercal}M_p + k_{nc}^{\intercal} M_{nc} + k_{nw}^{\intercal} M_{nw}\\
\rm{s.t.}~&\norm{v} \le 1 \\
& {v^\intercal}{\phi(z_i)} - {M_i} + d = 0 \\
& y_i=1\Rightarrow M_i\ge0 \\
& y_i=-1\Rightarrow M_i = M_i^{nw} + M_i^{nc} \land M_i^{nw} \geq 0 \land M_i^{nc} \leq 0
\end{aligned}
\end{equation}
where $k_p, k_{nw}, k_{nc} \succeq 0$ and $k_{nc} \preceq k_{nw}$.

\subsection{Helly's dimension for a variation of RCP}
The following theorem is used in the proof of Theorem \ref{thm:RCP}.
\begin{thm}\label{thm:helly_dim}
  For the following RCP:
  \begin{equation}\label{eq:orig_RCP}
    \begin{aligned}
\mathop {\min }\limits_{\alpha,\beta} &\;J\left( {\alpha,\beta} \right)\\
\rm{s.t.}&f\left( {\alpha,{\delta _i}} \right) \le 0,\forall {\delta _1},...,{\delta _N}\;i.i.d.\\
&g\left( {\alpha,\beta} \right) \le 0.
\end{aligned},
  \end{equation}
  where $\alpha\in\mathbb{R}^{n_1}$, $\beta\in\mathbb{R}^{n_2}$, $f$, $g$ and $J$ convex, the Helly's dimension $\zeta\le n_1+1$.
\end{thm}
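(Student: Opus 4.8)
The plan is to reduce the RCP in \eqref{eq:orig_RCP}, which has the extra ``free'' block of variables $\beta$ and the extra constraint $g(\alpha,\beta)\le 0$ that does not depend on the samples, to the standard form covered by the Helly's dimension bound $\zeta \le n+1$ in Lemma \ref{lemma:RCP}. The key observation is that the constraint $g(\alpha,\beta)\le 0$ and the objective $J(\alpha,\beta)$ involve $\beta$ only through their dependence on $\alpha$ once we ``optimize out'' $\beta$. Concretely, I would define the reduced objective
\begin{equation*}
    \tilde{J}(\alpha) \;=\; \inf_{\beta \,:\, g(\alpha,\beta)\le 0} J(\alpha,\beta),
\end{equation*}
with the convention $\tilde{J}(\alpha) = +\infty$ when the inner feasible set is empty. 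Then the program \eqref{eq:orig_RCP} is equivalent to $\min_{\alpha} \tilde{J}(\alpha)$ subject to $f(\alpha,\delta_i)\le 0$ for all $i$, a standard RCP in the single variable block $\alpha \in \mathbb{R}^{n_1}$, to which Lemma \ref{lemma:RCP} applies to give $\zeta \le n_1 + 1$.

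The steps, in order, would be: (i) verify that $\tilde{J}$ is convex in $\alpha$ — this follows because $J$ is jointly convex and $g$ is jointly convex, so $(\alpha,\beta)\mapsto J(\alpha,\beta)$ restricted to the convex set $\{g\le 0\}$ has a partial infimum over $\beta$ that is convex in the remaining variable (the standard ``partial minimization preserves convexity'' fact); (ii) argue that the optimal value and the set of supporting constraints of \eqref{eq:orig_RCP} are unchanged when we pass to the reduced program — an optimal $(\alpha^\star,\beta^\star)$ of \eqref{eq:orig_RCP} projects to an optimal $\alpha^\star$ of the reduced program with the same objective value, and a sampled constraint $f(\cdot,\delta_i)\le 0$ is supporting for \eqref{eq:orig_RCP} exactly when it is supporting for the reduced program, since removing it changes $\mathrm{Obj}$ in one program iff it does in the other; (iii) apply Lemma \ref{lemma:RCP} to the reduced RCP in $\mathbb{R}^{n_1}$ to conclude $\zeta \le n_1 + 1$.

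The main obstacle I anticipate is step (ii) together with a measurability/attainment technicality in step (i): the infimum defining $\tilde J$ need not be attained, and $\tilde J$ may fail to be lower semicontinuous or may take the value $-\infty$ on part of the domain, which would need to be excluded by a mild regularity assumption (e.g.\ that the program \eqref{eq:orig_RCP} is well posed with finite optimal value, as is implicitly the case here). A cleaner route that sidesteps the $\inf$-attainment issue is to work directly with the notion of supporting constraint: show that the Helly-type combinatorial argument underlying Lemma \ref{lemma:RCP} only ``sees'' the variables that the sampled constraints $f(\alpha,\delta_i)$ actually constrain, namely $\alpha$, so that the effective dimension is $n_1$ rather than $n_1+n_2$; the constraint $g(\alpha,\beta)\le 0$, being sample-independent, can be folded into the ambient feasible set without increasing Helly's dimension. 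I would present whichever of these two formalizations is shorter, most likely the partial-minimization one with a one-line remark that well-posedness of \eqref{eq:orig_RCP} guarantees $\tilde J$ is a proper convex function.
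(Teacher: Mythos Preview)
Your proposal is correct and follows essentially the same route as the paper: define the partial minimization $\tilde J(\alpha)=\inf_{\beta:\,g(\alpha,\beta)\le 0}J(\alpha,\beta)$ (with the $+\infty$ convention), observe that \eqref{eq:orig_RCP} is equivalent to the reduced RCP $\min_\alpha \tilde J(\alpha)$ subject to $f(\alpha,\delta_i)\le 0$, and invoke the standard Helly bound in dimension $n_1$. The paper's own proof is exactly this partial-minimization argument, though it is terser and simply case-splits on whether the reduced objective is finite; your discussion of convexity preservation and the attainment/well-posedness caveat is more careful than the paper's treatment but not a different idea.
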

\begin{proof}
  Since $J$ is convex, define
  \begin{equation}
    \bar J\left( \alpha \right) = \left\{ {\begin{array}{*{20}{l}}
{\mathop {\min }\limits_\beta \;J\left( {\alpha,\beta} \right)\;\rm{s.t.}\;g\left( {\alpha,\beta} \right) \le 0,\;\;if\;\exists \beta,g\left( {\alpha,\beta} \right) \le 0}\\
{\infty ,\;\;\;\;{\rm{                       otherwise}}}
\end{array}} \right.
  \end{equation}
  Then the RCP in \eqref{eq:orig_RCP} is equivalent to
  \begin{equation}\label{eq:new_RCP}
    \begin{aligned}
\mathop {\min }\limits_\alpha &\;\bar J\left( \alpha \right)\\
\rm{s.t.}&f\left( {\alpha,{\delta _i}} \right) \le 0,\forall {\delta _1},...,{\delta _N}\;i.i.d.
\end{aligned}
  \end{equation}
  When $\bar{J}\neq\infty$, the number of supporting constraint is at most $n_1+1$; when $\bar{J}=\infty$, the number of supporting constraint is zero. This proves that $\zeta\le n_1+1$.
\end{proof}
\subsection{Proof of Theorem \ref{thm:RCP}}\label{sec:RCP_thm_proof}
\begin{proof}
Feasibility can be seen by noticing that the problem is convex and $[w,c,M]=0$ is a solution. Then note that $M$ can be completely eliminated and represented as a function of $[w,c]$ by the equality constraint, and by Theorem \ref{thm:helly_dim}, Helly's dimension is upper bounded by $p+2$. Thus the upper bound on the probability of misclassification for unseen data is obtained by directly using theorem 3.3 in \cite{calafiore2010random}, see the proof therein.
\end{proof}
\subsection{Proof of Corollary \ref{thm:multi_hyperplane_RCP}}\label{sec:RCP_cor_proof}
\begin{proof}
The multi-hyperplane SVM can be viewed as the conjunction of $N_h$ SVMs, therefore we have
\begin{equation}
    \mathbb{P}(\bigvee\limits_{i=1}^{N_h}{((w^i)^\intercal \phi(z_{N+1})+c^i<0)})\le\sum\limits_{i=1}^{N_h}{\mathbb{P}((w^i)^\intercal \phi(z_{N+1})+c^i<0)},
\end{equation}
where for each probability on the right, by Theorem \ref{thm:RCP}, we have
$\mathbb{P}(\mathbb{P}((w^i)^\intercal \phi(z_{N+1})+c^i<0)>\epsilon)<\Phi(\epsilon,p+2,N)$. Since this is true for all $0\le\epsilon\le 1$, we have $\mathbb{E}\{\mathbb{P}((w^i)^\intercal \phi(z_{N+1})+c^i<0)\}\le \mathop{\min}\limits_{0\le\epsilon\le 1}\epsilon+\Phi(\epsilon,p+2,N)(1-\epsilon)$, denoted as $\bar{\epsilon}(N,p+2)$. It is easy to check that this minimum is taken on a bounded function of $\epsilon$ on a compact set, therefore the minimum can always be obtained. Then by Markov inequality, we have for $0\le\epsilon'\le1$,
\[ \mathbb{P}\left\{\sum\limits_{i=1}^{N_h}{\mathbb{P}((w^i)^\intercal \phi(z_{N+1})+c^i<0)}\ge \epsilon'\right\}\le  \frac{\sum\limits_{i=1}^{N_h} \mathbb{E}\{\mathbb{P}((w^i)^\intercal \phi(z_{N+1})+c^i<0)\}}{\epsilon'}\le \frac{\bar{\epsilon}(N,p+2)N_h}{\epsilon'} \]
\end{proof}
\end{document}